\newtheorem{theorem}{Theorem}[section]
\newtheorem{corollary}[theorem]{Corollary}
\theoremstyle{definition}
\newtheorem{definition}[theorem]{Definition}
\theoremstyle{remark}
\newtheorem{remark}[theorem]{Remark}
\newcommand{\cP}{\mathcal{P}}
\newcommand{\cQ}{\mathcal{Q}}
\newcommand\cV{\mathcal{V}}
\newcommand\cE{\mathcal{E}}
\newcommand{\R}{\mathbb{R}}
\newcommand{\tG}{\widetilde{G}}
\newcommand{\ta}{\widetilde{\alpha}}
\newcommand\zer{\phi}
\newcommand\morse{\mu}
\begin{document}

%\title{A variational characterization of nodal domains of graph eigenfunctions}

\title[Stability of nodal structures and nodal count]{Stability of
  nodal structures in graph eigenfunctions and its relation to the
  nodal domain count}

\author{Gregory~Berkolaiko$^{1}$}

\address{$^{\text{1}}$Dept of Mathematics, Texas A\&M
  University, College Station, TX 77843-3368, USA}

\author{Hillel Raz$^{2}$}

\address{$^{2}$Cardiff School of Mathematics, Cardiff University,
Cardiff CF24 4AG, UK}

\author{Uzy~Smilansky$^{2,\,3}$}

\address{$^{3}$Department of Physics of Complex Systems, The Weizmann Institute of Science, Rehovot 76100, Israel}

\begin{abstract}
  The nodal domains of eigenvectors of the discrete Schr\"odinger
  operator on simple, finite and connected graphs are considered.
  Courant's well known nodal domain theorem applies in the present
  case, and sets an upper bound to the number of nodal domains of
  eigenvectors: Arranging the spectrum as a non decreasing sequence,
  and denoting by $\nu_n$ the number of nodal domains of the $n$'th
  eigenvector, Courant's theorem guarantees that the nodal deficiency
  $n-\nu_n$ is non negative. (The above applies for generic
  eigenvectors. Special care should be exercised for eigenvectors with
  vanishing components.)  The main result of the present work is that
  the nodal deficiency for generic eigenvectors equals to a Morse
  index of an energy functional whose value at its relevant critical
  points coincides with the eigenvalue. The association of the nodal
  deficiency to the stability of an energy functional at its critical
  points was recently discussed in the context of quantum graphs
  \cite{BBRS} and Dirichlet Laplacian in bounded domains in
  $\mathbb{R}^d$ \cite{BKS}. The present work adapts this result to
  the discrete case.  The definition of the energy functional in the
  discrete case requires a special setting, substantially different
  from the one used in \cite{BBRS,BKS} and it is presented here in
  detail.
\end{abstract}

\maketitle

%%%%%%%%%%%%%%%%%%%%%%%%%%%%%%%%%%%%%%%%%%%%%%%%%%%%
\section{Introduction}

Courant's nodal domain theorem can be viewed as a generalization of
Sturm's oscillation theorem to Laplace-Beltrami operators in higher
dimensions. In the one-dimensional case, after ordering the spectrum
of the Sturm-Liouville operator as an increasing sequence, the
oscillation theorem guarantees that the $n$'th eigenfunction $f^{(n)}$
flips sign $\zer_n=n-1$ times in the open interval. Equivalently, the
number of nodal domains $\nu_n$ --- defined as the number of intervals
where $f^{(n)}$ is of constant sign --- equals $n$. The Sturm
oscillation theorem can be written concisely as
\begin{equation}\label{eq:sturm}
 n=\nu_n=\zer_n+1 \, .
\end{equation}
In higher dimensions, the two equalities in (\ref{eq:sturm}) have to
be modified. For $d>1$ there is no natural analogue of $\zer_n$, and
therefore the right equality has to be discarded. Courant
\cite{Courant23} showed that the left equality cannot hold in general,
and it should be replaced by a bound:
\begin{equation}
\label{eq:courant} n\ge \nu_n,
\end{equation}
where $\nu_n$ is the number of maximal connected components of the
domain on which the $n$'th eigenfunction has constant sign.  Later
studies \cite{Pleijel56,Pol_pams09} have shown that equality holds
only for finitely many eigenvectors.  These states are referred to as
Courant sharp.

Courant's nodal domain theorem was extended to the Laplace operator on
metric (quantum) graphs \cite{GSW03} and on discrete graphs
\cite{DGLS01} (see also references therein).  The latter are the
subject of the present work, and they will be discussed in detail in
the next section.

The interest in counting the nodal domains increased in the
mathematical and physical communities when it was realized that the
nodal sequence $\{ \nu_n \}_{n=1}^{\infty}$ stores metric information
about the manifold where the Laplace-Beltrami operator is defined
\cite{BGS02,BDS07,BS02,BOS08,Klawonn09,BKP07}. It was shown, in
particular, that in some cases the nodal sequences of isospectral
domains are different \cite{GSS05,BKP07,BS07,BSS06} and that in some
other examples one can uniquely reconstruct the domain geometry from
the given nodal sequence \cite{PKUS08}.

A new point of view was introduced in the pioneering article of
Helffer, Hoffmann-Ostenhof and Terracini \cite{HHT09}, where a
variational approach was used to locate the Courant sharp
eigenfunctions in the spectrum.  Helffer \emph{et. al.} investigated
the Dirichlet Laplacian in a bounded domain $\Omega \in
\mathbb{R}^{d}$ with $d \ge 2$.  Partitioning $\Omega$ arbitrarily
into $\nu$ sub-domains $\Omega_k$, they studied the lowest Dirichlet
eigenvalue $\lambda_1(\Omega_k)$ for each of the sub-domains. The
maximal value amongst the $\lambda_1(\Omega_k)$ for a given partition
$\mathcal{P}$ (denoted as $\Lambda(\nu;\mathcal{P})$) can be viewed as
the ``energy'' of the partition.  Helffer \emph{et. al.}  proved that
the partitions that minimize $\Lambda(\nu;\mathcal{P})$ coincide with
the nodal partition induced by a Courant sharp eigenfunction if and
only if the minimizing partition is bipartite.

We note that that the restriction of an eigenfunction $f^{(n)}$ to a
nodal domain $\Omega_k$ is the ground state of the Dirichlet Laplacian on
$\Omega_n$.  The corresponding ground energy $\lambda_1(\Omega_k)$ is
equal to the eigenvalue $\lambda_n(\Omega)$ on the entire domain.
Thus a partition corresponding to an eigenfunction will have a special
property: $\lambda_1(\Omega_k)$ is the same for all $k$.  A partition
with this property will be called an equipartition.

In two recent papers \cite{BBRS,BKS} the approach of Helffer
\emph{et. al.} was broadened in a substantial way.  It was shown that
the functional $\Lambda(\nu;\mathcal{P})$, when restricted to a
submanifold of equipartitions, becomes smooth.  One can then study
the critical points of $\Lambda(\nu;\mathcal{P})$, i.e.\ the points
where the variation with respect to perturbations of the partition boundaries
vanishes.  It was shown that the critical $\nu$-partitions that are
bipartite are in one-to-one correspondence with the nodal partitions
of eigenfunctions with $\nu$ nodal domains. Moreover, the Morse index
$\morse_n$ at the critical partitions equals the nodal deficiency of
the corresponding eigenstate,
\begin{equation}\label{eq:morseg}
  \morse_n=n-\nu_n \ .
\end{equation}
Thus, in the space of bipartite equipartitions, the critical points in the
landscape of $\Lambda(\nu;\mathcal{P})$ are at eigenfunctions, and
their stability (number of directions at which the critical point is a
maximum) determines the nodal deficiency. The result (\ref{eq:morseg})
was proved for the Laplacian on metric graphs (quantum graphs) in
\cite{BBRS} and was then shown to hold on domains in $\mathbb{R}^{d}$
in \cite{BKS}. The present work complements the above mentioned
articles by showing that (\ref{eq:morseg}) applies also for the
discrete Schr\"odinger operator on finite graphs.

The variational approach used in \cite{BBRS,BKS} crucially depends on
the ability to smoothly change the boundaries of the domains.  There
is no direct analogue of this for discrete graphs.  In the present
work we show how one can use local variation of the \emph{potential}
in place of the variation of the partition boundaries.  This is the
chief new element introduced in this article, and it enables us to
arrive at the main result of the present work, namely equality (\ref
{eq:morseg}) for generic eigenvectors of the graph Shr\"odinger
operator. The proof is provided in theorem \ref{thm:morse}.

There is another special feature that distinguishes graphs (metric or
discrete) from other manifolds: one can describe the nodal structures
in terms of points where the wave function changes its sign (for
discrete graphs these will correspond to changes of the sign of the
vertex eigenvector across a connecting edge).  This allows us to
reinterpret the bound on the number of nodal domains as a
generalization of Sturm theorem. Indeed, we show that the right
equality in (\ref {eq:sturm}) is replaced by upper and lower bounds on
$\phi_n$, given in equation (\ref {eq:zero_bound_mod}), Theorem
\ref{thm:nodal_bound_mod}.

The paper is organized in the following way. The next section provides
a few definitions and known facts from spectral graph theory which are
necessary for the ensuing discussion. The construction of partitions
and the analogue of boundary variations require some ground work which
is carried out in the section \ref{section:preparations}. The main
results will be formulated and proved in section \ref{section:results}.

%%%%%%%%%%%%%%%%%%%%%%%%%%%%%%%%%%%%%%%%%%%%%%%%%%%%
\section{Definitions and general background}

In the present chapter we provide a few definitions and facts which
set the stage for the subsequent discussion.

A graph $G = (\mathcal{V}, \mathcal{E})$ consists of a set of vertices
$\mathcal{V} = \mathcal{V}(G)$ and a set of connecting edges
$\mathcal{G} = \mathcal{E}(G)$. We shall use: $V(G) = |\mathcal{V}(G)|$
 and $E(G) =|\mathcal{E}(G)|$.  An edge $e\in \mathcal{E}(G)$
connecting the vertices $i, j \in \mathcal{V}(G)$ will be also denoted
by $e=(i,j)=(j,i)$.  We will use the notation $i \sim j$ to indicate
that the vertices $i$ and $j$ are connected (are neighbors).  A graph
is simple if no more than a single edge connects two vertices and no
vertex is connected to itself; otherwise the graph is a multi-graph.
A graph is said to be connected if there exists a path of connected
vertices between any two vertices in $\mathcal{V}$. Unless otherwise
specified, the graphs we consider here are finite, connected and
simple.

The connectivity of a graph $G$ is summarized by the adjacency matrix
$A(G)$.
\begin{equation}\label{eq:adj}
[A(G)]_{i,j} = \left\{\begin{array}{cc} 1, & \text{if } i\sim j \\
0, & \text{otherwise}. \end{array} \right.
\end{equation}
The degree $d_i$ of the vertex $i$ is the number of its neighbors; it
can be expressed via the adjacency matrix as
\begin{equation*}
  d_i=\sum_{j\sim i}1 = \sum_j A(G)_{i,j}.
\end{equation*}
The (first) Betty number of a connected graph is defined as
\begin{equation}
  \beta(G) =   E(G) -   V(G)+1\ .
\end{equation}
This is the number of independent cycles on the graph.

A subgraph $G' \subseteq G$ is itself a graph, defined by a subset of
the vertex set $\mathcal{V}(G') \subseteq \mathcal{V}(G)$ and a subset
of the edge set $\mathcal{E}(G') \subseteq \mathcal{E}(G)$. A factor
(or spanning graph) of $G$ is a subgraph, $G' \subseteq G$, which
shares with $G$ the vertex set, $\mathcal{V}(G') = \mathcal{V}(G)$.
By removing $\beta(G)$ edges one can generate a factor which is a tree
(a spanning tree).  There exist more than one spanning tree. In what
follows we shall often construct a sequence of connected factors by
starting with $G$ and removing edges one at a time, while keeping the
resulting factor connected and ending with a spanning tree after
$\beta(G)$ steps.

A class of factors which will play a prominent role here are
$\nu$-partitions.
\begin{definition}
  \label{def:partition}
  A $\nu$-\emph{partition} of the graph $G$, denoted by $\mathcal{P}$,
  is a factor consisting of $\nu$ \emph{disjoint} subgraphs, $P_k$,
  such that for any two vertices $i$ and $j$ in the same connected
  component $P_k$, if $(i,j) \in \mathcal E(G)$ then they are also
  connected in $P_k$.  In other words, only the edges running
  \emph{between} components were removed from $G$ to form $\mathcal
  P$.  We write $\mathcal{P}=\bigcup_{k=1}^{\nu} P_k$.
\end{definition}

We associate to any partition $\mathcal{P}$ a multi-graph
$\tG(\mathcal{P})$ with $\nu$ vertices.  The vertices of
$\tG(\mathcal{P})$ are in one-to-one correspondence with the
disjoint subgraphs of $\mathcal{P}$, while the edges of $\tG(\cP)$
correspond to the edges that were removed from $G$.  A partition is
bipartite if its graph $\tG(\mathcal{P})$ is bipartite, or a
tree if $\tG(\mathcal{P})$ is a tree, \emph{etc}.  Note that if
$\tG(\mathcal{P})$ is a tree, each pair of nodal domains is connected
by at most one edge. Edges of this kind are referred to as {\it bridges}
in the graph theory literature.

%%%%%%%%%%%%%%%%%%%%%%%%%%%%%%%%%%%%%%%%%%%%%%%%%%
\subsection{The Schr\"odinger Operator}

The discrete Schr\"odinger operator acts on the Hilbert space of real
vectors $f\in R^{V(G)}$ where the components $f_i$ are enumerated by
the vertex indices $i$. The Schr\"odinger operator is a sum of two
operators - the Laplacian and an on-site potential. The Laplacian is
usually defined as
\begin{equation}\label{eq:Lap}
  L(G) = -A(G) + D(G)
\end{equation}
where $A(G)$ is the adjacency matrix of $G$ and $D(G)$ is a diagonal
matrix with $[D(G)]_{ii} = d_i$.

The on-site potential, $Q(G)$ is a diagonal matrix
$[Q(G)]_{ii} = q_i \in \R$.  However, since both $D$ and $Q$
are diagonal matrices and we will not impose any restrictions on the
choice of site potentials $q_i$, we can ``absorb'' the degree matrix
in the potential and define the Schr\"odinger operator
\begin{equation}\label{eq:SO}
  H(G) = -A(G) + Q(G).
\end{equation}
This convention significantly simplifies the notation later.
The action of the Schr\"odinger operator (which we will also call the
Hamiltonian) on a vector $f$ is given by
\begin{equation}
  \big[H(G)f \big]_i = -\sum_{j \sim i} f_j + q_i f_i \ .
\end{equation}

The spectrum of (\ref{eq:SO}) is discrete and finite. We denote it as
\begin{equation*}
  \sigma(G) = \{ \lambda_n(G)\}_{n = 1}^{V(G)}
  \qquad\text{with}\qquad
  \lambda_1(G) < \lambda_2(G) \le \ldots \le \lambda_n(G).
\end{equation*}
The eigenvector corresponding to $\lambda_n(G)$ will be denoted as
$f^{(n)} = \{f_i^{(n)}\}_{i=1}^{V(G)}$.

\begin{remark}
  The definitions above can be generalized by associating positive
  weights $m_{i,j} =m_{j,i}$ to the connected bonds in $G$. The
  weighted adjacency matrix is defined as $[M(G)]_{i,j} =[A(G)]_{i,j}
  m_{i,j}$ .  The results derived in the present paper are valid for
  this generalized version of the Schr\"odinger operator. However to
  simplify the notation we shall present the results for the case of
  uniform unit weights.
\end{remark}

%%%%%%%%%%%%%%%%%%%%%%%%%%%%%%%%%%%%%%%%%%%%%%%%%%
\subsection{Nodal Domains}

Let $f : G \rightarrow \mathbb{R}^{V(G)}, \, f = \{f_i\}_{i=1}^{V(G)}$
be an arbitrary real function on the graph $G$. We define a strong
nodal domain as a maximally connected subgraph of $G$ such that on all
of its vertices the components of $f$ have the same sign. Vertices
where $f$ vanishes do not belong to a strong nodal domain. A weak
positive (respectively negative) nodal domain is defined as a
maximally connected subgraph $Y$ such that for all $i \in
\mathcal{V}(Y)$, $f_i \ge 0 \, ($respectively $f_i \le 0)$.  In this
case, a weak nodal domain must contain at least one vertex where $f$
is non-zero.  We will seek to understand the number of nodal domains
of $n$-th eigenfunction of the Schr\"odinger operator \eqref{eq:SO}.

\begin{definition}
 \label{def:nondeg_efun}
 We will call an eigenfunction \emph{non-degenerate} if it corresponds
 to a simple eigenvalue and its values at vertices are non-zero.
\end{definition}

In this paper only non-degenerate eigenfunctions $f$ will be
considered.\footnote{This behavior is generic with respect to
  perturbations of the potential $Q$.}  In such cases the strong and
weak nodal domains are the same and we denote by $\nu(G;f)$ their
number. Thus, any $f : G \rightarrow (\mathbb{R} \backslash 0
)^{V(G)}$ induces a bipartite $\nu(G;f)$-partition of $G$ with
components which are the nodal domains. The number of edges of $G$
which are deleted to generate the partition will be denoted by
$\zer(G;f)$, it is the number of sign flips of $f$ that occur along
edges of $G$. The number of independent cycles in $G$ where $f$ has a
constant sign (i.e. cycles that are contained in a single nodal
domain) will be denoted by $\ell(G;f)$. The following identity relates
these quantities \cite{BOS08}:
\begin{equation}\label{eq:IRU}
  \nu(G;f) = \zer(G;f) - \beta(G) + \ell(G;f) + 1 \  .
\end{equation}

We shall focus on partitions induced by eigenvectors $f^{(n)}$ of
(\ref{eq:SO}). Because of the special role played by this function, we
shall use the abbreviations $\nu_n=\nu(G;f^{(n)}),\
\zer_n=\zer(G;f^{(n)})\ {\rm and} \ \ell_n=\ell(G;f^{(n)})$.  Here,
Courant's theorem \cite{DGLS01}, and its extensions \cite{B} state
that the number of nodal domains is bounded by
\begin{equation}\label{eq:B}
  n - \beta(G) \le \nu_n \le n \ .
\end{equation}

Note that the lower bound on the number of nodal domains is not
optimal.  For well connected graphs such as $d$-regular graphs,
$\beta(G) =\frac{V(G)(d-2)}{2} + 1$ and hence, for $d\geq 4$, we have
$n-\beta(G)\le 0$ for all $n\le V(G)$.  An improved lower bound will be
derived below.

Following \cite{BBRS,HHT09}, we define,
\begin{definition}
  An eigenvector of the Hamiltonian (\ref{eq:SO}) is Courant sharp if
  $\nu_n=n$.
\end{definition}
The eigenvectors for the discrete Hamiltonian on trees are all Courant
sharp \cite{Fied} (see also \cite{B}).  While it is a special case of
(\ref{eq:B}) with $\beta(G)=0$, it is the first step in the proof of
the lower bound in \eqref{eq:B} for general $\beta$.

The chromatic number $\chi$ of a graph can be used to give a bound on
the number of nodal domains \cite{Oren07} : $\nu \le V-\chi+2$. Hence,
the only graphs for which the highest eigenvector (with $n=V(G)$) can
be Courant sharp, are bipartite graphs where $\chi=2$.

%%%%%%%%%%%%%%%%%%%%%%%%%%%%%%%%%%%%%%%%%%%%%%%%%%%%
\section{Edge manipulations and the parameter dependent
  Hamiltonian} \label {section:preparations}

The main result of this paper is stated in Theorem
\ref{thm:morse}. Its formulation requires definitions, concepts and
facts which are provided in the first half of this section. To gain
some intuition to the more formal discussions, we start by making the
following simple observation.

Let $f=f^{(n)}$ be an non-degenerate eigenvector of the Hamiltonian
$H(G)$, see Definition~\ref{def:nondeg_efun}.
Let $G'$ be a connected factor of $G$ obtained by deleting an edge
$e=(i,j)$.  Following \cite{B}, we will modify the potential at the
vertices $i$ and $j$ in such a way that $f$ is an eigenfunction of the
factor $G'$ with an eigenvalue that equals $\lambda_n(G)$.  That is,
there is $m$, such that $\lambda_m(G') =\lambda_n(G)$.  Note that $n$
and $m$ are not necessarily equal, and the other eigenvalues and
eigenfunctions of $H(G')$ need not coincide with those of $H(G)$.  To
work out the necessary modification to the potentials we rewrite (\ref
{eq:SO}) at site $i$
\begin{equation*}
  (H(G) f)_i = -\sum_{k \sim i} f_k + q_i f_i
  = -\sum_{{k \sim i},\, {k\neq j}} f_k + \left(q_i - \frac{f_j}{f_i}\right) f_i
\end{equation*}
and similarly at site $j$,
\begin{equation*}
  (H(G) f)_j % &=& -\sum_{k \sim j} f_k + d_j f_j + q_j  f_j \nonumber \\
  = -\sum_{k \sim j,\,k\neq i} f_k + \left(q_j - \frac{f_i}{f_j}\right) f_j  \ .
\end{equation*}
Thus,
\begin{equation}
\label{eq:factorh} H(G) f =  \left(-A(G')  +
  \widetilde{Q}(G')\right) f\  ,
\end{equation}
where the potential $\widetilde{Q}(G')$ coincides with the
original potential $Q(G')$ on all vertices except the
vertices $i$ and $j$ where the potential is modified to be
\begin{equation}\label{eq:pot}
  \tilde q_i = q_i - \frac{f_j}{f_i} \qquad
  \tilde q_j = q_j - \frac{f_i}{f_j}.
\end{equation}
The operator which multiplies $f$ on the right hand side of
(\ref{eq:factorh}) above is a Hamiltonian operator for the factor
$G'$,
\begin{equation}
  H(G') =  -A(G')   + \widetilde{Q}(G')\  .
\end{equation}
Clearly $f$ is an eigenvector of $H(G')$ corresponding to an
eigenvalue which equals $\lambda_{n}(G)$ but whose position $m$ in
$\sigma(G')$ is not necessarily $n$.

This formal exercise acquires more substance once the modified
potentials are defined by replacing in (\ref {eq:pot}) the quotient
$\frac{f_j}{f_i}$ with a real non zero parameter $\alpha$ and $\frac{f_i}{f_j}$ with $1/\alpha$.  The resulting
parameter dependent Hamiltonian $H(G';\alpha)$ for the factor
$G'$ is defined as
\begin{equation}
  \label{eq:def_H_param}
  H(G';\alpha) = H(G) + B(\alpha).
\end{equation}
where the perturbation $B=B(\alpha)$ has only four non-vanishing entries,
\begin{align}
  \label{eq:B_r1}
  B_{i,i} &= -\alpha, &
  B_{i,j} &= 1 \\
  \label{eq:B_r2}
  B_{j,i} &= 1 &
  B_{j,j} &= -1/\alpha
\end{align}
The matrix $B(ij)$ has rank $1$, with a single non vanishing
eigenvalue $-\alpha-\frac{1}{\alpha}$.

The following theorem is due to Weyl (see, for example, \cite{Weyl})
and will be extensively used in the sequel.
\begin{theorem} \label{tm:weyl}
  Let $A, B, C$ be real and symmetric
  $N\times N$ matrices with $C=A+B$. Denote by $a_1\le a_2\le \cdots
  a_N$ and $c_1\le c_2\le \cdots c_N$ the spectra of $A$ and $C$
  respectively. If $B$ is of rank $1$ then the spectra of $C$ and $A$
  interlace in the following way:
  % \\  If $B$ is positive semi-definite,
 \begin{equation}
   \label{eq:Weylpos}
   \cdots \le a_{k-1}\le c_{k-1} \le a_{k } \le c_{k}\le \cdots \qquad
   \text{if } B\geq 0,
 \end{equation}
 and
 \begin{equation}
   \label{eq:Weylneg}
   \cdots \le a_{k-1}\le c_{k} \le a_{k } \le c_{k+1}\le \cdots \qquad
   \text{if } B\leq 0,
 \end{equation}
\end{theorem}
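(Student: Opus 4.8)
The plan is to derive both interlacing chains from the Courant--Fischer min-max characterization of the eigenvalues, using only the fact that a rank-$1$ symmetric matrix has a kernel of codimension $1$. Throughout I write $R_M(x) = \langle Mx,x\rangle/\langle x,x\rangle$ for the Rayleigh quotient of a symmetric matrix $M$ at $x\ne0$, and I recall the two equivalent forms $m_k = \min_{\dim S = k}\max_{0\ne x\in S} R_M(x) = \max_{\dim S = N-k+1}\min_{0\ne x\in S} R_M(x)$ for the $k$-th eigenvalue $m_k$ of $M$.

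First I would note that the two cases are equivalent, so it suffices to treat $B\ge0$: if $B\le0$ then $A = C + (-B)$ with $-B\ge0$ of rank $1$, and the $B\ge0$ statement applied to the pair $(C,A)$ (with the roles of the two spectra exchanged) gives $c_k\le a_k$ and $a_k\le c_{k+1}$, which is exactly \eqref{eq:Weylneg} after reindexing. For $B\ge0$ I would establish $a_k\le c_k$ and $c_k\le a_{k+1}$ for every admissible $k$; together these are \eqref{eq:Weylpos}. The lower bound is immediate: $R_C(x) = R_A(x) + \langle Bx,x\rangle/\langle x,x\rangle \ge R_A(x)$ for all $x\ne0$, and monotonicity of the min-max expression in the function $R$ gives $c_k\ge a_k$. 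For the upper bound, let $W = \ker B$, so $\dim W = N - \mathrm{rank}\,B = N-1$; let $U$ be the span of an orthonormal set of eigenvectors of $A$ for $a_1,\dots,a_{k+1}$, so $\dim U = k+1$ and $\max_{0\ne x\in U} R_A(x) = a_{k+1}$. Then $\dim(U\cap W)\ge (k+1)+(N-1)-N = k$; choosing any $k$-dimensional subspace $T\subseteq U\cap W$, every $x\in T$ has $Bx=0$, hence $R_C(x) = R_A(x)\le a_{k+1}$, and the min-of-max form yields $c_k\le \max_{0\ne x\in T} R_C(x)\le a_{k+1}$.

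The only remaining care is the bookkeeping at the edges of the spectrum (when $k=1$ or $k=N$, where some entries of the displayed chains are simply absent) and a clean statement of the dimension count $\dim(U\cap W)\ge k$; everything else is routine. I expect no genuine obstacle here --- this is the classical Weyl rank-$1$ interlacing inequality --- the only thing to get right is which of the two Courant--Fischer forms is invoked at each step. I would remark, though without using it, that the alternative route through the secular equation $1 \pm \langle (A-\lambda)^{-1}v,v\rangle = 0$ for $B = \pm vv^{\mathrm{T}}$ also proves the theorem and, as a bonus, shows the interlacing is strict when $A$ is simple and $v$ has no vanishing component in an eigenbasis of $A$; the min-max argument is preferable because it handles all degenerate cases uniformly.
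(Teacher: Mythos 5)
Your proof is correct. Note, however, that the paper does not prove this statement at all: it is quoted as a classical result of Weyl with a citation to the literature, so there is no in-paper argument to compare yours against. Your Courant--Fischer treatment is the standard textbook proof and is complete: the reduction of the $B\leq 0$ case to the $B\geq 0$ case by writing $A=C+(-B)$ is clean, the pointwise inequality $R_C\geq R_A$ gives $a_k\leq c_k$, and the dimension count $\dim(U\cap W)\geq k$ with $W=\ker B$ of codimension $1$ gives $c_k\leq a_{k+1}$, which together constitute the chain \eqref{eq:Weylpos}. The only loose ends are the ones you already flagged (indices at the edges of the spectrum, where the corresponding inequalities are vacuous), and these are genuinely routine. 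Your closing remark about the secular-equation route is apt in this context, since the strictness it yields under genericity assumptions is close in spirit to the non-degeneracy hypotheses the paper imposes elsewhere, but the min-max argument is the right choice for the theorem as stated.
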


 After all these preparations we can arrive to the following theorem.
 \begin{theorem}
   \label{tm:critical}
   Let $G$ be a simple, connected graph with a Hamiltonian $H(G)$, and
   let $G'$ be a connected factor obtained by deleting the edge
   $(i,j)$ with parametrized Hamiltonian $H(G';\alpha)$ defined by
   \eqref{eq:def_H_param}-\eqref{eq:B_r2}.  Let
   $\lambda_m(G';\alpha)$ and $f^{(m)}(\alpha)$ be the $m$'th
   eigenvalue and the corresponding eigenvector of $H(G';\alpha)$.
   Consider the the eigenvalue $\lambda_m(G';\alpha)$ as a function of
   $\alpha$.  Its critical points $\alpha^{+}$ and $\alpha^{-}$ satisfy
   the equation
   \begin{equation}
     \label{eq:critpoint}
     \alpha = \pm \frac{f_j^{(m)}(\alpha)}{f_i^{(m)}(\alpha)}
  \end{equation}
   with the corresponding sign.

   Furthermore, as long as $\alpha^{+}\ne 0$ and finite,
   $\lambda_m(G';\alpha^{+})$ is an eigenvalue of $H(G)$ whose
   position in $\sigma(G)$ is $n\in \{m-1, m, m+1\}$.  The
   eigenvector $f^{(m)}(\alpha^{+})$ is the corresponding eigenvector
   of $H(G)$.

   Conversely, if $f^{(n)}$ is an non-degenerate eigenvector of $H(G)$
   then
     \begin{equation}
       \label{eq:crit_alpha}
       \alpha^+ = \frac{f^{(n)}_j}{f^{(n)}_i}
     \end{equation}
     is a critical point of $\lambda_m(G';\alpha)$ where $m$ is
     such that $n\in \{m-1, m, m+1\}$.
\end{theorem}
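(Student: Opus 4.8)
The plan is to analyze the eigenvalue branch $\lambda_m(G';\alpha)$ using first-order perturbation theory. Away from eigenvalue crossings the branch is analytic, and the Hellmann–Feynman formula gives
\begin{equation*}
  \frac{\rd}{\rd\alpha}\lambda_m(G';\alpha)
  = \big\langle f^{(m)}(\alpha),\, B'(\alpha)\, f^{(m)}(\alpha)\big\rangle,
\end{equation*}
where $f^{(m)}(\alpha)$ is normalized and $B'(\alpha)$ is the entrywise derivative of $B(\alpha)$ from \eqref{eq:B_r1}–\eqref{eq:B_r2}, namely $B'_{i,i}=-1$, $B'_{j,j}=1/\alpha^2$, and the off-diagonal entries vanish. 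Writing $u=f^{(m)}_i(\alpha)$ and $v=f^{(m)}_j(\alpha)$, the derivative equals $-u^2 + v^2/\alpha^2$, so a critical point satisfies $\alpha^2 = v^2/u^2$, i.e.\ $\alpha = \pm v/u$, which is exactly \eqref{eq:critpoint} with the two choices of sign corresponding to the critical points called $\alpha^+$ and $\alpha^-$. I would record here the subtlety that this argument presupposes $u\neq 0$ along the branch; the case $f^{(m)}_i(\alpha)=0$ must be flagged as the excluded degenerate situation (and is why the statement restricts to $\alpha^+$ finite and nonzero).

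Next I would show that at a critical point the perturbation $B(\alpha^+)$ acts trivially on the eigenvector. From \eqref{eq:crit_alpha}, $\alpha^+ = v/u$, so $(B(\alpha^+) f^{(m)}(\alpha^+))_i = -\alpha^+ u + v = 0$ and $(B(\alpha^+) f^{(m)}(\alpha^+))_j = u - v/\alpha^+ = 0$; all other rows of $B$ are zero. Hence $B(\alpha^+) f^{(m)}(\alpha^+)=0$, and from \eqref{eq:def_H_param},
\begin{equation*}
  H(G)\, f^{(m)}(\alpha^+) = H(G';\alpha^+)\, f^{(m)}(\alpha^+) - B(\alpha^+) f^{(m)}(\alpha^+)
  = \lambda_m(G';\alpha^+)\, f^{(m)}(\alpha^+).
\end{equation*}
So $f^{(m)}(\alpha^+)$ is a genuine eigenvector of $H(G)$ and $\lambda_m(G';\alpha^+)\in\sigma(G)$. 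This also handles the converse direction: if $f^{(n)}$ is a non-degenerate eigenvector of $H(G)$ and we set $\alpha^+ = f^{(n)}_j/f^{(n)}_i$, then $B(\alpha^+) f^{(n)}=0$ by the same computation, hence $H(G';\alpha^+) f^{(n)} = H(G) f^{(n)} = \lambda_n f^{(n)}$, so $f^{(n)}$ is an eigenvector of $H(G';\alpha^+)$ with eigenvalue $\lambda_n$, say in position $m$; and since the Hellmann–Feynman derivative vanishes there ($-u^2 + v^2/(\alpha^+)^2 = -u^2 + u^2 = 0$), $\alpha^+$ is a critical point of the $m$-th branch.

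To pin down the position $n$ of $\lambda_m(G';\alpha^+)$ in $\sigma(G)$, I would invoke Weyl's interlacing, Theorem~\ref{tm:weyl}. The matrix $B(\alpha)$ has rank one with single nonzero eigenvalue $-\alpha - 1/\alpha$; its sign depends on the sign of $\alpha$. Since $H(G) = H(G';\alpha^+) - B(\alpha^+)$, comparing $A=H(G';\alpha^+)$ with $C=H(G)$ via a rank-one perturbation $-B(\alpha^+)$ gives, depending on whether $-B(\alpha^+)\ge 0$ or $\le 0$, the interlacing $c_{k}\le a_k$ or $a_k\le c_k$ shifted by at most one index; combining the two one-sided bounds yields $|n-m|\le 1$, i.e.\ $n\in\{m-1,m,m+1\}$. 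The main obstacle I anticipate is the regularity of the branch $\lambda_m(G';\alpha)$ at the critical point: if $\lambda_m$ coincides with a neighbouring eigenvalue of $H(G';\alpha)$ there, the naive Hellmann–Feynman argument breaks down and one must argue using analytic perturbation theory of the relevant spectral projection, or else observe directly that the non-degeneracy of $f^{(n)}$ (simple eigenvalue of $H(G)$, nonzero entries) forces the branch through it to be simple and analytic in a neighbourhood. I would treat this carefully, isolating the non-degeneracy hypothesis as precisely what rules out the pathological crossing.
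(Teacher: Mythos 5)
Your proposal is correct and follows essentially the same route as the paper's proof: the Hellmann--Feynman derivative $-\left(f^{(m)}_i\right)^2 + \alpha^{-2}\left(f^{(m)}_j\right)^2$ for the critical-point equation, the observation that $B(\alpha^+)f^{(m)}(\alpha^+)=0$ so the eigenvector transfers between $H(G';\alpha^+)$ and $H(G)$ in both directions, and Weyl's rank-one interlacing for $n\in\{m-1,m,m+1\}$. Your added remarks on the regularity of the branch and the role of non-degeneracy are sensible refinements the paper leaves implicit.
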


\begin{proof}
  If $f^{(m)}(\alpha)$ is normalized, by
  first order perturbation theory we have
  \begin{equation}
    \frac{{\rm d} \lambda_{m}(G',\alpha)}{{\rm d}\alpha}
    =\left( f^{(m)}(\alpha),\ \frac{{\rm d} H(G';\alpha)}
      {{\rm d}\alpha} f^{(m)}(\alpha) \right)
  \end{equation}
  The only $\alpha$ dependent entries of $H(G';\alpha)$ are the
  off-diagonal terms in $B$, see \eqref{eq:B_r1}-\eqref{eq:B_r2}.
  Hence
  \begin{equation}\label{eq:crit1}
    \frac{{\rm d} \lambda_{m}(G',\alpha)}{{\rm d}\alpha}
    = -\left(f^{(m)}_i(\alpha)\right)^2
    + \frac{1}{\alpha^2}\left(f^{(m)}_j(\alpha)\right)^2\ ,
  \end{equation}
  which is equal to zero at the solutions of
  equation~\eqref{eq:critpoint}.

  From now on, we consider only the \emph{relevant} critical points
  $\alpha^{+}$.  At $\alpha^{+}$ one can easily check that
  \begin{equation*}
    B(\alpha^+) f^{(m)}(\alpha^{+}) = 0
  \end{equation*}
  and therefore
  \begin{equation*}
    \lambda_m(G';\alpha^{+}) f^{(m)}(\alpha^{+})
    = H(G';\alpha^+) f^{(m)}(\alpha^{+})
    = H(G) f^{(m)}(\alpha^{+}).
  \end{equation*}
  Conversely, if $f^{(n)}$ is an non-degenerate eigenvector of $H(G)$
  then it was shown in equation \eqref{eq:factorh} that $f^{(n)}$ is an
  eigenvector of $H(G';\alpha)$ for $\alpha$ determined by equation
  \eqref{eq:crit_alpha}.

  Finally, the matrix $B$ is of rank-$1$ and Weyl's theorem \ref{tm:weyl}
  guarantees that the spectra of $H(G)$ and $H(G')$ interlace according
  to (\ref{eq:Weylpos}) or (\ref{eq:Weylneg}), showing that $n\in
  \{m-1, m, m+1\}$.
\end{proof}

Further insight can be gained by examining closely how the position of
the eigenvalue in the spectrum and the number of nodal domains change
depending on the sign and the type (minimum or maximum) of the
critical point.  Note that we will shift the point of view and will
use $G'$ as the base graph, i.e.\ we will investigate the change in
the above quantities when an edge is added back.

Let $\alpha^c$ denote a relevant critical point of
$\lambda_m(G';\alpha)$.  We start with the case $\alpha^c>0$.
Weyl's theorem (\ref{eq:Weylpos}) implies that for all $\alpha>0$
(that is $B\leq0$),
\begin{equation*}
  \lambda_{m-1}(G)\le \lambda_m(G';\alpha)\le \lambda_m(G).
\end{equation*}
At the critical point $\alpha=\alpha^c$ the value
$\lambda_m(G';\alpha^c)$ belongs to the spectrum of $H(G)$ and
therefore equals either $\lambda_{m-1}(G)$ or $\lambda_m(G)$.  Hence,
$\lambda_m(G';\alpha^{c})$ is a \emph{maximal} value of
$\lambda_m(G';\alpha)$ if it equals $\lambda_m(G)$ and a
\emph{minimal} value if it equals $\lambda_{m-1}(G)$.  For future use
it is convenient to introduce the following notation.  Let $M(ij)$
take the value $+1$ if the critical point is a maximum, and $0$ if it
is a minimum.  Let $\Delta n(ij)$ stand for the shift in the position
of the eigenvalue in the spectrum, from its position in $\sigma(G',
\alpha^c)$ to its position in $\sigma(G)$. One can summarize the
findings so far by the following statement:
\begin{equation}
  \label{eq:alphagt0}
  {\rm For} \  \alpha^c > 0 \   \ :\ \  M(ij) - \Delta n(ij) = 1.
\end{equation}
Similarly, for $\alpha^c<0 $, Weyl's theorem (\ref {eq:Weylneg}) implies
that
\begin{equation*}
  \lambda_{m}(G)\le \lambda_m(G';\alpha)\le \lambda_{m+1}(G).
\end{equation*}
Hence, $\lambda_m(G';\alpha^{c})$ must attain either the maximal value
$ \lambda_{m+1}(G)$ or the minimal value $\lambda_{m}(G)$.  Using the
same notation as above we find:
\begin{equation}
  \label{eq:alphalt0}
  {\rm For} \  \alpha^c < 0 \   \ :\ \  M(ij) - \Delta n(ij) = 0
\end{equation}

More information on the transition from $G$ to $G'$ is gained by viewing the eigenvector
$f^{(m)}(\alpha^{c})$ first as an eigenvector on $G'$ and then as an
eigenvector on $G$.  This dual view point is now applied to follow the
variation in the number of cycles with constant sign and the number of
nodal domains, as one counts them with respect to $G'$ or to $G$.  It
is important to remember that the sign of $\alpha^c$ is the relative
sign of the $i$ and $j$ components $f^{(m)}(\alpha^{c})$ across the
edge $(i,j)$.

Starting with $\alpha^{c}<0$, the number of nodal domains and the
number of loops of constant sign are not affected by the transition
from $G'$ to $G$ since $f_j / f_i <0$ implies that $i$ and $j$ are in
nodal domains with different signs.  Denoting by $\Delta \ell (ij)$
the change in the number of loops with constant sign, and by $\Delta
\nu (ij)$ the change in the number of nodal domain, we have
\begin{equation}
  \label{eq:alphalt0f}
  {\rm For} \  \alpha^c < 0 \   \ :\ \  \Delta \ell (ij) - \Delta \nu (ij) = 0\ .
\end{equation}

On the other hand,  for $\alpha^{c}>0$  either the number of nodal domains or the number of
cycles of constant sign will change.  Indeed, the edge $(i,j)$ either
connects two vertices that already belong to the same nodal domain,
increasing $\ell(f^{(m)})$ by $1$ (and leaving
$\nu(f^{(m)})$ unchanged) or it connects two nodal domains of the same
sign, in which case $\nu(f^{(m)})$ \emph{decreases} by $1$ while
$\ell(f^{(m)})$ remains constant.  This  leads to
\begin{equation}
\label{eq:alphagt0f}
{\rm For} \  \alpha^c > 0 \   \ :\ \  \Delta \ell (ij) - \Delta \nu (ij) = 1\ .
\end{equation}

Comparing (\ref{eq:alphalt0}) to (\ref{eq:alphalt0f}) and (\ref{eq:alphagt0}) to (\ref{eq:alphagt0f}), the observations above can summarized by:
\begin{equation} \label{eq:morse1}
\Delta \ell (ij)-\Delta \nu (ij)=  M(ij) -\Delta n (ij)\ ,
\end{equation}
which is valid for both signs of $\alpha^{c}$.

The discussion so far centered on a factor $G'$ which differs from the
original graph $G$ by the deletion of a single edge. However, by
successive applications of Theorem \ref{tm:critical} it can be
generalized to any connected factor of $G$ obtained by the elimination
of an arbitrary number of edges while modifying the appropriate vertex
potentials. The set of parameters will be denoted by $\alpha =\{
\alpha_e : e \in \mathcal{E}(G)\setminus \mathcal{E}(G') \}$, the
parameter dependent Hamiltonian is $H(G',\alpha)$, and
$\lambda_m(G';\alpha)$ is its $m$'th eigenvalue.

\begin{corollary}
  \label{cor:thm1}
  Let $G$ be a graph as previously and $G'$ a connected factor
  obtained by deleting $k \le \beta(G)$ edges from $G$. Let $\alpha^c$
  be a relevant critical point of $\lambda_m(G';\alpha)$. Then,
  provided that none of the components of $f^{(m)}(\alpha^c)$ vanishes
  at vertices where edges were added,
  $\lambda_m(G';\alpha^c)=\lambda_n(G)\in \sigma(G)$ with $|n-m|\le
  k$.  The corresponding eigenvectors are the same.
\end{corollary}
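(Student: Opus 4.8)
The plan is to prove Corollary~\ref{cor:thm1} by induction on $k$, the number of deleted edges, using Theorem~\ref{tm:critical} as the base case $k=1$. Let me outline the inductive step. Suppose $G'$ is obtained from $G$ by deleting edges $e_1,\dots,e_k$, and enumerate them so that $G'' := G \setminus \{e_1,\dots,e_{k-1}\}$ is connected and $G' = G'' \setminus \{e_k\}$ is connected (such an enumeration exists by the remarks preceding Definition~\ref{def:partition} about building connected factors one edge at a time). Write $\alpha = (\alpha', \alpha_{e_k})$ where $\alpha'$ collects the parameters for $e_1,\dots,e_{k-1}$. The parametrized Hamiltonian then factors as $H(G';\alpha) = H(G'';\alpha') + B_{e_k}(\alpha_{e_k})$, since the perturbations $B_{e}$ act on disjoint pairs of matrix entries in the off-diagonal and additively on the diagonal; in particular the off-diagonal $(i_k,j_k)$ entry is untouched by $\alpha'$.

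First I would establish that a relevant critical point $\alpha^c = ((\alpha')^c, \alpha_{e_k}^c)$ of $\lambda_m(G';\alpha)$, viewed as a function on the full parameter space, restricts to a critical point in each coordinate block. Because the mixed entries are independent, the partial derivative $\partial \lambda_m / \partial \alpha_{e_k}$ has exactly the form computed in \eqref{eq:crit1} (with $i,j$ replaced by $i_k,j_k$), so vanishing of this partial derivative together with the ``relevant'' sign condition forces $\alpha_{e_k}^c = f^{(m)}_{j_k}(\alpha^c)/f^{(m)}_{i_k}(\alpha^c)$, hence $B_{e_k}(\alpha_{e_k}^c) f^{(m)}(\alpha^c) = 0$ by the same computation as in the proof of Theorem~\ref{tm:critical}. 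Therefore $H(G'';\alpha') f^{(m)}(\alpha^c) = H(G';\alpha^c) f^{(m)}(\alpha^c) = \lambda_m(G';\alpha^c) f^{(m)}(\alpha^c)$, so $\lambda_m(G';\alpha^c)$ lies in $\sigma(G'';(\alpha')^c)$; moreover the hypothesis that $f^{(m)}(\alpha^c)$ does not vanish at $i_k$ or $j_k$ is exactly what is needed for the single-edge step, and $(\alpha')^c$ is a relevant critical point of $\lambda_{m'}(G'';\alpha')$ where $m'$ is the position of this eigenvalue in $\sigma(G'';(\alpha')^c)$ — with $|m - m'| \le 1$ by the rank-$1$ Weyl interlacing applied to $B_{e_k}$.

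Then I would invoke the inductive hypothesis: $G''$ is obtained from $G$ by deleting $k-1$ edges, $(\alpha')^c$ is a relevant critical point of $\lambda_{m'}(G'';\alpha')$, and the non-vanishing hypothesis holds at all vertices where $e_1,\dots,e_{k-1}$ attach (since $f^{(m)}(\alpha^c)$ is, up to normalization, the same eigenvector throughout — it is annihilated by each $B_{e_r}(\alpha_{e_r}^c)$). Hence $\lambda_{m'}(G'';(\alpha')^c) = \lambda_n(G) \in \sigma(G)$ with $|n - m'| \le k-1$, and the corresponding eigenvector agrees. Combining $|m - m'| \le 1$ with $|m' - n| \le k-1$ gives $|n - m| \le k$, and the chain of identities shows the eigenvectors all coincide, completing the induction.

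The main obstacle is bookkeeping the eigenvalue index as one passes through the intermediate graphs: one must be careful that $m'$ really is the index of $\lambda_m(G';\alpha^c)$ within $\sigma(G'';(\alpha')^c)$ and that this is consistent with the Weyl interlacing bound $|m-m'|\le 1$; this requires knowing that $\lambda_m(G';\alpha^c)$, as a number, does sit in $\sigma(G'';(\alpha')^c)$ (established above) and that the interlacing of Theorem~\ref{tm:weyl} pins down any eigenvalue common to $A$ and $C=A+B$ with $B$ rank one to within one index. A secondary subtlety is verifying that the chosen edge ordering keeps every intermediate factor connected, which is guaranteed since deleting $k \le \beta(G)$ edges from $G$ cannot disconnect it as long as one avoids bridges at each step — and such an ordering is precisely the ``sequence of connected factors'' construction already described in Section~\ref{section:preparations}. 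Everything else is a routine transcription of the single-edge argument.
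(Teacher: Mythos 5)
Your proposal is correct and follows exactly the route the paper intends: the paper offers no written proof of Corollary~\ref{cor:thm1} beyond the remark that it follows ``by successive applications of Theorem~\ref{tm:critical}'', and your induction on $k$ --- ordering the deleted edges so every intermediate factor is connected, showing $B_{e_k}(\alpha^c_{e_k})f^{(m)}(\alpha^c)=0$ so the eigenvector passes to $G''$, and accumulating the rank-one Weyl interlacing bounds $|m-m'|\le 1$ --- is precisely the chaining argument being alluded to. The details you supply (that the restricted parameter vector is again a relevant critical point for the intermediate graph, and that the non-vanishing hypothesis propagates because the eigenvector is unchanged) are the right ones and fill the gap the paper leaves implicit.
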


The lower bound on the number of nodal domains, the left part of
equation~\eqref{eq:B}, was proved by chaining the operations of edge
removal \cite{B}.  In fact, a more careful book-keeping allows us to sharpen
the lower bound (see \cite{BBRS} for the same inequality in the
context of quantum graphs).

\begin{theorem}
  \label{thm:nodal_bound_mod}
  Let $f^{(n)}$ be the $n$-th eigenfunction of the Hamiltonian $H(G)$
  such that the corresponding eigenvalue is simple and $f^{(n)}$ has
  no zero components.  Then the number of nodal domains of $G$ with
  respect to $f^{(n)}$ satisfies
  \begin{equation}
    \label{eq:nodal_bound_mod}
    n-(\beta(G) - \ell_n) \leq \nu_n \leq n.
  \end{equation}
  Correspondingly, the number $\zer_n$ of edges across which the
  eigenvector $f^{(n)}$ changes its sign satisfies the bound
  \begin{equation}
    \label{eq:zero_bound_mod}
    n  - 1 \le \zer_n \le n +(\beta(G)-\ell_n) - 1.
  \end{equation}
\end{theorem}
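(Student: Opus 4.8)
The plan is to obtain the two displayed chains of inequalities from a single edge-by-edge reduction of $G$ to a spanning tree, using Corollary~\ref{cor:thm1} together with the bookkeeping identity~\eqref{eq:morse1} derived above. The upper bounds $\nu_n\le n$ and $\zer_n\le n+(\beta(G)-\ell_n)-1$ are equivalent via the identity~\eqref{eq:IRU}, $\nu_n=\zer_n-\beta(G)+\ell_n+1$, so it suffices to establish the statements about $\nu_n$, and $\nu_n\le n$ is already known from~\eqref{eq:B}; the real content is the lower bound $n-(\beta(G)-\ell_n)\le\nu_n$. Throughout, $f=f^{(n)}$ is a fixed non-degenerate eigenvector of $H(G)$, with nodal count $\nu_n$, sign-flip count $\zer_n$, and constant-sign cycle count $\ell_n$.

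First I would set up the reduction. Choose an ordering $e_1,\dots,e_{\beta(G)}$ of a set of $\beta(G)$ edges whose removal turns $G$ into a spanning tree $T$, each intermediate factor $G_k:=G\setminus\{e_1,\dots,e_k\}$ being connected. At each stage, using the parameter values $\alpha_{e_\ell}=f_{j_\ell}/f_{i_\ell}$ dictated by~\eqref{eq:crit_alpha}, the fixed eigenvector $f$ is an eigenvector of the parametrized Hamiltonian $H(G_k;\alpha)$ at a critical point of the corresponding eigenvalue branch; by Corollary~\ref{cor:thm1} its position $n_k$ in $\sigma(G_k)$ satisfies $|n_k-n_{k-1}|\le 1$, with $n_0=n$. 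For each step the quantities introduced before the theorem — $M(e_k)\in\{0,1\}$ (whether the critical point is a maximum, i.e.\ $\Delta n(e_k)$ in~\eqref{eq:alphagt0}/\eqref{eq:alphalt0}) and the changes $\Delta\nu(e_k)$, $\Delta\ell(e_k)$ in the nodal and constant-sign-cycle counts when $e_k$ is added back — are well-defined, and the per-edge identity~\eqref{eq:morse1} gives $\Delta\ell(e_k)-\Delta\nu(e_k)=M(e_k)-\Delta n(e_k)$.

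Next I would sum~\eqref{eq:morse1} over $k=1,\dots,\beta(G)$. Telescoping the left side gives $(\ell_n-\ell(T;f))-(\nu_n-\nu(T;f))$; since $T$ is a tree, $\ell(T;f)=0$ and, by the tree case of Courant's theorem (eigenvectors on trees are Courant sharp), $\nu(T;f)=m$ where $m$ is the position of the relevant eigenvalue in $\sigma(T)$. Telescoping the right side gives $\bigl(\sum_k M(e_k)\bigr)-(n-m)$. Hence
\begin{equation}
  \ell_n-\nu_n+m = \sum_{k=1}^{\beta(G)} M(e_k) - n + m,
\end{equation}
that is, $n-\nu_n = \sum_k M(e_k) - \ell_n$. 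Since $0\le M(e_k)\le 1$ for each of the $\beta(G)$ edges, we get $0\le \sum_k M(e_k)\le\beta(G)$, which yields exactly $-\ell_n\le n-\nu_n\le \beta(G)-\ell_n$, i.e.\ the claimed bound $n-(\beta(G)-\ell_n)\le\nu_n\le n$. The bound~\eqref{eq:zero_bound_mod} on $\zer_n$ then follows by substituting $\nu_n=\zer_n-\beta(G)+\ell_n+1$ from~\eqref{eq:IRU}.

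The main obstacle is justifying that this reduction can be carried out with a \emph{single} fixed eigenvector and \emph{without hitting degeneracies} — specifically, that at each intermediate factor $G_k$ the relevant eigenvalue $\lambda_{n_k}(G_k)$ is simple and the vertex components of $f$ at the endpoints of the next edge $e_{k+1}$ are nonzero, so that Corollary~\ref{cor:thm1} and the per-step identity~\eqref{eq:morse1} genuinely apply. The vertex components of $f$ never change along this reduction (it is always the same vector $f$), so non-vanishing at edge endpoints is inherited from the non-degeneracy of $f^{(n)}$ on $G$. Simplicity of $\lambda_{n_k}(G_k)$ is more delicate, since perturbations of the potential can create degeneracies; here one argues, as in~\cite{B,BBRS}, that it is enough for $f$ to be the \emph{unique} (up to scalar) eigenvector of $H(G_k)$ for its eigenvalue, and if a degeneracy did occur one may perturb the order of edge deletions or invoke genericity — but the cleanest route is to note that the chain of interlacings from $\sigma(T)$ back up to $\sigma(G)$, combined with simplicity of $\lambda_n(G)$, forces the relevant branch to stay simple at each stage except possibly at isolated values, which can be avoided. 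I would present this carefully as the one step deserving a detailed argument, and treat the telescoping and the appeal to~\eqref{eq:IRU} as routine.
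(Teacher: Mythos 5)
Your proposal follows essentially the same route as the paper: cut $\beta(G)$ edges down to a spanning tree, invoke Fiedler's theorem there, telescope the identity \eqref{eq:morse1} while adding the edges back to obtain $n-\nu_n=\sum M(e)-\ell_n$, bound $0\le\sum M(e)\le\beta(G)$, and pass to \eqref{eq:zero_bound_mod} via \eqref{eq:IRU}. The only caveat is that $\sum M(e)\ge 0$ gives $\nu_n\le n+\ell_n$ rather than $\nu_n\le n$, but you correctly import the upper bound from \eqref{eq:B}, exactly as the paper does.
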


\begin{remark}
  Note that the lower bound on $\nu_n$, the quantity $n-(\beta(G) -
  \ell_n)$ is always non-negative.
\end{remark}

\begin{remark}
Equation (\ref{eq:zero_bound_mod}) is the generalization of Sturm's  oscillation theorem to discrete graphs.
\end{remark}

\begin{proof}[Proof of Theorem~\ref{thm:nodal_bound_mod}]
  We cut $\beta$ edges of the graph $G$, modifying the potential
  accordingly, until we arrive to a tree $T$, such that $f=f^{(n)}$ is
  it eigenfunction.  It is eigenfunction number $m$ and, since it is a
  tree, Fiedler theorem \cite{Fied} implies it has $\nu_m(T) = m$
  nodal domains.  Also, $\ell_m(T) = 0$, because there are no cycles
  on a tree.

  We rewrite equation~\eqref{eq:morse1} in the form
  \begin{equation*}
    \Delta \nu(e) = \Delta n(e) + \Delta\ell(e) - M(e),
  \end{equation*}
  where $e$ is the removed edge.  Adding back the removed edges one by
  one and adding the above identities to the equation $\nu_m(T) = m$
  we arrive at
  \begin{equation*}
    \nu_n(G) = n + \ell_n - \sum M(e).
  \end{equation*}
  Since the number of maxima in the sequence is at most $\beta$, the
  number of nodal domains $\nu_n(G)$ is at least $n+\ell_n-\beta$,
  proving inequality~\eqref{eq:nodal_bound_mod} (the upper bound is due to
  \cite{DGLS01}).  Substituting equation~\eqref{eq:IRU} for $\nu_n$,
  we obtain inequality~\eqref{eq:zero_bound_mod}.
\end{proof}

\begin{corollary}
  \label{cor:tree_partition_sharp}
  If $f^{(n)}$ satisfies the conditions of
  Theorem~\ref{thm:nodal_bound_mod} and its nodal partition graph
  $\tG$ is a tree (or, equivalently, the edges on which $f^{(n)}$
  changes sign do not lie on cycles of the graph), then $f^{(n)}$ is
  Courant-sharp: $\nu_n=n$.
\end{corollary}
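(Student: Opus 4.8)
The plan is to read off the corollary from the lower bound in Theorem~\ref{thm:nodal_bound_mod}, once the hypothesis on $\tG$ is recast as the arithmetic identity $\ell_n=\beta(G)$.

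First I would translate the hypothesis. The nodal partition graph $\tG$ has $\nu_n$ vertices, one per nodal domain, and $\zer_n$ edges, one per sign-flip edge of $f^{(n)}$, and it is connected because $G$ is; hence $\tG$ is a tree exactly when $\zer_n=\nu_n-1$. Feeding this into the identity~\eqref{eq:IRU}, $\nu_n=\zer_n-\beta(G)+\ell_n+1$, shows that $\tG$ is a tree if and only if $\ell_n=\beta(G)$, i.e.\ every independent cycle of $G$ is confined to a single nodal domain --- which is precisely the parenthetical form of the hypothesis. (The quick way to see the forward direction is that a cycle of $G$ through a sign-flip edge $e$ would, after collapsing each nodal domain to a point, produce a closed walk in $\tG$ using the edge of $\tG$ associated with $e$ an odd number of times, which a tree does not admit; the reverse direction follows since then the sign-flip edges are bridges of $G$, and deleting all $\zer_n$ of them increases the number of components by exactly $\zer_n$, forcing $\zer_n=\nu_n-1$.)

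With $\ell_n=\beta(G)$ established, the lower bound in~\eqref{eq:nodal_bound_mod} reads $n-(\beta(G)-\ell_n)=n\le\nu_n$, which combined with the upper bound $\nu_n\le n$ gives $\nu_n=n$, so $f^{(n)}$ is Courant sharp. The only step requiring attention is the equivalence of the two formulations of the hypothesis --- in particular checking that ``$\tG$ is a tree'' is equivalent to, and not merely sufficient for, $\ell_n=\beta(G)$ --- and this is handled by the Euler-characteristic bookkeeping through~\eqref{eq:IRU} together with the connectedness of $\tG$. Everything else is a direct specialization of Theorem~\ref{thm:nodal_bound_mod}.
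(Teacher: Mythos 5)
Your proof is correct and follows essentially the same route as the paper: both reduce the hypothesis to the identity $\ell_n=\beta(G)$ and then read off $\nu_n=n$ from the two-sided bound \eqref{eq:nodal_bound_mod}. The only difference is that you derive $\ell_n=\beta(G)$ by Euler-characteristic bookkeeping through \eqref{eq:IRU}, whereas the paper asserts it directly by noting that removing the sign-flip edges breaks no cycles; your version is a slightly more explicit justification of the same step.
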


\begin{proof}
  Indeed, if $\tG$ is a tree, then no cycles are broken when removing
  the edges connecting the nodal domains and $\ell_n=\beta$.
\end{proof}

%%%%%%%%%%%%%%%%%%%%%%%%%%%%%%%%%%%%%%%%%%%%%%%%%%%%
\section{Critical Partitions -  the main theorems  and  proofs} \label {section:results}

So far we discussed the reduction of a graph to its connected
factors. The generation of partitions (disconnected factors) requires
the introduction of some more concepts and definitions. Let
$\mathcal{P}$ be a bipartite $\nu(G)$-partition of $G$, see
Definition~\ref{def:partition} and the discussion below it.  We denote
by $\mathcal{E(P)}$ the edge set of $\cP$.  Since we wish to
characterize the partitions that appear as nodal domains, we will
ofter refer to the connected components $P_k$ of $\cP$ as ``domains''.

Let $\zer(G;{\mathcal P})$ denote the number of edges in
$\mathcal{E}(G) \backslash \mathcal{E}(\mathcal{P})$. Construct the
Hamiltonian $H(\mathcal{P};\alpha)$, $\alpha \in \mathbb{R}^{\zer(G)}$
where for each deleted edge $e=(i,j)$, $i<j$, the potentials
$\alpha_{e}$ and $\alpha_{e}^{-1}$ are added to the vertices $i$ and
$j$ as was explained above.  Note that the ordering of the vertices in
an edge can be chosen arbitrarily; we chose $i<j$ for definiteness.
The Hamiltonian $H(\mathcal{P};\alpha)$ is block-diagonal, namely
\begin{equation*}
  H(\mathcal{P};\alpha)=\bigoplus_{k=1}^{\nu(G)}H(P_k;\alpha).
\end{equation*}
Let $\lambda_1(P_k;\alpha)$ be the lowest eigenvalue of
$H(P_k;\alpha)$ with the corresponding eigenvector denoted by
$g(P_k;\alpha)$.  Note that since $g(P_k;\alpha)$ is a ground state,
all its components have the same sign.  We extend $g(P_k;\alpha)$ to
all vertices of the graph $G$ by setting $g_s(P_k;\alpha)=0$ for $s
\not\in \cV(P_k)$.  We assume $g(P_k;\alpha)$ is normalized and
non-negative,
\begin{equation}
  \label{eq:normalize}
  \sum_{s\in \cV(P_k)} \left|g_s(P_k;\alpha)\right|^{2}
  = \sum_{s\in \cV(G)} \left|g_s(P_k;\alpha)\right|^{2}
  = 1,
  \qquad g_s(P_k;\alpha) \geq 0.
\end{equation}

\begin{definition}
  \label{def:equi}
  Let $\cP$ be a bipartite $\nu$-partition of $G$ with the parameter
  dependent Hamiltonian $H(\mathcal{P};\alpha)$.  An
  \emph{equipartition} is a pair $(\cP, \alpha')$, where $\alpha' \in
  \mathbb{R}^{\zer(G;{\mathcal P})}$ is a vector of parameters such
  that all the lowest eigenvalues $\lambda_1(P_k;\alpha')$ are equal,
  \begin{equation} \label{eq:equip}
    \lambda_1(P_1;\alpha') = \lambda_1(P_2;\alpha')
    = \cdots = \lambda_1(P_{\nu};\alpha').
  \end{equation}
\end{definition}
Consider the set $\cQ(\mathcal{P})\subset
\mathbb{R}^{\zer(G;{\mathcal P})}$ in the space of parameters $\alpha$
where equation~\eqref{eq:equip} is satisfied (equivalently, all
$\alpha$ such that $(\cP, \alpha)$ is an equipartition).  On this set,
the function $\Lambda(\mathcal{P};\alpha')= \lambda_1(P_k;\alpha')$
will be referred to as the equipartition energy.  Obviously, the index
$k$ is arbitrary.

Intuitively, the $(\nu-1)$ equalities in (\ref {eq:equip}) reduce
the number of independent variables required to define the
equipartition energy from $\zer(G;{\mathcal P})$ to $\eta :=
\zer(G;{\mathcal P})-(\nu-1)$.  Notice that, because of
(\ref{eq:IRU}),
\begin{equation}
\eta = \zer(G;\mathcal{P})-(\nu-1) = \beta(G) -\ell(G;\mathcal{P}) \ge 0 \ .
\end{equation}

If the partition $\mathcal{P}$ is induced by an eigenvector $f^{(n)}$
of $H(G)$ then it also generates an equipartition with the vector
$\alpha^c$ of parameters defined by
\begin{equation}
  \label{eq:induced_equipartition}
  \alpha^c_e=\frac{f^{(n)}_j}{f^{(n)}_i},
  \qquad \text{for each } e=(i,j) \in
  \cE(G) \setminus \cE(\cP),
  \quad i<j.
\end{equation}
Then the equipartition energy coincides with the eigenvalue of $G$,
$\Lambda(\mathcal{P};\alpha^c)= \lambda_n(G)$.  In the sequel we shall
consider the equipartition energy in the neighborhoods of these
special points.  Note that in the definition above the vertices $i$
and $j$ belong to different nodal domains.  Therefore the special
values of the parameters $\alpha_e$ are negative.  This is why from
now on we restrict our attention to the negative subspace of the
parameter space, $\R_-^{\zer(G,\cP)}$.

The main results of the paper can now be formulated in terms of two
theorems which distinguish between the cases $\eta=0$ and $\eta>0$.

\begin{theorem}
  \label{thm:tree}
  Let $\mathcal{P}$ be a bipartite $\nu$-partition of $G$ with $\eta =
  \zer(G;\mathcal{P})-(\nu-1) = 0$.  Then $\cP$ is a tree partition
  and there exists at most one equipartition $(\cP, \gamma)$, such that
  $\gamma \in \R_-^{\zer(G; \cP)}$.  The value $\lambda
  =\Lambda(\mathcal{P};\gamma)$ is the $\nu$-th eigenvalue in the
  spectrum of $G$, and the corresponding eigenvector is Courant sharp.
\end{theorem}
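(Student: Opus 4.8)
The plan is to exploit the combinatorial identity $\eta = \beta(G) - \ell(G;\mathcal P) = 0$, which forces every edge deleted to form $\mathcal P$ to be a bridge of $G$: since no independent cycle survives inside a domain, each of the $\beta(G)$ independent cycles of $G$ must be broken, so all the removed edges lie on cycles and — because $\tG(\mathcal P)$ then has $\zer(G;\mathcal P) = \nu - 1$ edges and $\nu$ vertices and must be connected — $\tG(\mathcal P)$ is a tree. This establishes the first assertion. For the existence and uniqueness of $\gamma$, I would add the $\nu-1$ bridges back one at a time, tracking a single eigenvalue through the chain of factors as in Theorem~\ref{tm:critical} and Corollary~\ref{cor:thm1}. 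Concretely, start from the disconnected factor $\mathcal P$, where the ``eigenvalue'' of interest is the common ground-state energy $\Lambda$; adding back one bridge $e=(i,j)$ with parameter $\alpha_e < 0$, Theorem~\ref{tm:critical} says a relevant critical point of the appropriate eigenvalue $\lambda_m(\,\cdot\,;\alpha_e)$ occurs exactly when $\alpha_e = f^{(m)}_j/f^{(m)}_i$, and at that point the eigenvector extends to an eigenvector of the larger factor with the same eigenvalue.

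The key step is to show that requiring \emph{all} $\nu$ ground-state energies to coincide pins down the parameters uniquely in $\R_-^{\nu-1}$. Here I would argue by monotonicity: fix a reference domain and note that each $\lambda_1(P_k;\alpha)$ depends on only those parameters $\alpha_e$ attached to vertices of $P_k$, and that $\lambda_1(P_k;\alpha_e)$ is a strictly monotone (in fact, by \eqref{eq:crit1}-type reasoning, first decreasing then increasing, but on the relevant branch monotone) function of each negative $\alpha_e$ entering it, since the rank-one perturbation $B(\alpha_e)$ has its single nonzero eigenvalue $-\alpha_e - 1/\alpha_e$ varying monotonically for $\alpha_e<0$ on each side of $\alpha_e=-1$, while only the ground state is at issue and ground-state energies of a connected Schrödinger operator move with the sign of the perturbation. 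Because $\tG(\mathcal P)$ is a tree, one can orient it, root it, and solve the $\nu-1$ equipartition equations \eqref{eq:equip} sequentially from the leaves inward: each new equation involves exactly one not-yet-determined parameter and, by the monotonicity just described together with the surjectivity of $\lambda_1(P_k;\alpha_e)$ onto an interval containing the common value, has at most one solution with $\alpha_e<0$. This yields at most one $\gamma \in \R_-^{\zer(G;\cP)}$.

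Finally, to identify $\lambda = \Lambda(\mathcal P;\gamma)$ as $\lambda_\nu(G)$ with a Courant-sharp eigenvector: by construction $\gamma$ makes the block-diagonal $H(\mathcal P;\gamma)$ have a $\nu$-dimensional ground-state eigenspace spanned by the extensions $g(P_k;\gamma)$, and at an equipartition point each $B(\gamma_e)$ annihilates the relevant combination, so the vector obtained by signing the $g(P_k;\gamma)$ according to the bipartition of $\tG(\mathcal P)$ is an eigenvector $f$ of $H(G)$ with eigenvalue $\lambda$; it has no zero components (the ground states are strictly signed on connected domains) and its sign changes occur precisely on the $\nu-1$ bridges. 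Its nodal partition is $\mathcal P$, so $\nu(G;f) = \nu$, and since $\tG$ is a tree, Corollary~\ref{cor:tree_partition_sharp} gives $\ell = \beta$ and hence via Theorem~\ref{thm:nodal_bound_mod} that $f$ is Courant sharp, i.e. $\lambda = \lambda_\nu(G)$. The main obstacle I anticipate is the monotonicity/uniqueness argument: one must rule out that two different negative choices of the same parameter give the same ground-state energy, which requires care about which branch of $\lambda_1(P_k;\alpha_e)$ is relevant and a genuine strict-monotonicity statement (via Perron–Frobenius positivity of the ground state, or strict Weyl interlacing) rather than the weak interlacing in Theorem~\ref{tm:weyl}.
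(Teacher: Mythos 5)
Your proposal has the right overall shape but two of its three pillars have genuine gaps. First, a small sign of confusion in the (otherwise correct) tree claim: $\eta=0$ is equivalent, via \eqref{eq:IRU}, to $\ell(G;\cP)=\beta(G)$, which means every independent cycle of $G$ survives \emph{inside} a single domain and the removed edges are bridges of $G$ --- the opposite of your assertion that ``each of the $\beta(G)$ independent cycles must be broken.'' Your conclusion is saved only by the edge/vertex count ($\tG(\cP)$ is connected with $\nu$ vertices and $\nu-1$ edges), which is essentially the paper's argument. The more serious gap is in the construction of the eigenvector: ``signing the $g(P_k;\gamma)$ according to the bipartition'' does not produce an eigenvector of $H(G)$. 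For $B(\gamma_e)$ (as in \eqref{eq:B_r1}--\eqref{eq:B_r2}) to annihilate $f$ across a bridge $e=(i,j)$ one needs $\gamma_e=f_j/f_i$, which constrains the \emph{magnitudes} of the components, not just their signs; with unit coefficients this fails in general. The paper's proof hinges on choosing coefficients $t_k$ recursively from a root domain, $t_r=t_1\gamma_e\, g_i/g_j$, and the tree structure of $\tG(\cP)$ is precisely what guarantees this propagation is consistent (no cycle forces a compatibility condition). That recursion is the heart of the proof and is absent from your write-up; the alternation of signs of the $t_k$ then comes for free from $\gamma_e<0$ and the positivity of the ground states.

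Your uniqueness argument is genuinely different from the paper's, but as stated it does not go through. The equipartition equations $\lambda_1(P_k;\alpha)=\lambda_1(P_{k'};\alpha)$ are coupled: already for three domains in a path, the ``leaf'' equation $\lambda_1(P_1;\alpha_{e_1})=\lambda_1(P_2;\alpha_{e_1},\alpha_{e_2})$ involves both parameters, so it is false that each new equation involves exactly one undetermined parameter. The scheme can be repaired --- introduce the common value $\Lambda$ as an auxiliary unknown, solve $\lambda_1(P_k;\cdot)=\Lambda$ leaf-to-root for each $\alpha_e(\Lambda)$ using strict monotonicity of $\lambda_1$ in the diagonal perturbations $-\alpha_e$ and $-1/\alpha_e$ (strictness does follow from Perron--Frobenius positivity of $g$ at the attachment vertices, as you suspect), and then check that the closing equation at the root has a decreasing left-hand side against the increasing right-hand side $\Lambda$ --- but this is considerably more work than the paper needs. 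Once the eigenvector construction above is done properly, uniqueness is immediate: every equipartition with $\gamma\in\R_-^{\zer(G;\cP)}$ yields a Courant-sharp eigenfunction with $\nu$ nodal domains (by Corollary~\ref{cor:tree_partition_sharp}, as you correctly invoke), there is at most one such eigenfunction, and $\gamma$ is recovered from it via $\gamma_e=f_j/f_i$. I recommend you supply the recursive coefficient construction and then replace the monotonicity argument by this one-line deduction.
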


\begin{proof}
  By definition of the partition graph $\tG(\cP)$, it has $\nu$
  vertices and $\zer(G; \cP)$ edges.  Therefore its Betti number is
  $\zer(G;\mathcal{P})-\nu+1 = 0$ and it is a tree (see an
  illustration in Fig.~\ref{fig:partition}(b) ignoring the dashed edges).

  Let $\gamma$ be a vector of negative parameters such that $(\cP,
  \gamma)$ is an equipartition.
  The functions $g(P_k,\gamma)$, normalized as in (\ref{eq:normalize}),
  can be used to construct an eigenvector of $H(G)$.  Consider
  \begin{equation}
    \label{eq:construct_tree_eigenstate}
    f=\sum_{q=1}^{\nu_n} t_k \, g(P_k;\gamma),
  \end{equation}
  where the coefficients $t_k$ are determined as follows. Pick
  arbitrarily one of the domain in $\mathcal{P}$, say $P_1$, to be the
  ``root'' of the partition tree and let $t_1=1$.  For every domain
  $P_r$ adjacent to $P_1$ on the partition tree, let $e=(i,j)$ be the
  unique edge connecting $P_1$ to $P_r$ with $i\in P_1$ and $j\in P_r$.
  Without loss of generality assume that $i<j$ and take
  \begin{equation*}
    t_r = t_1 \gamma_{e} \frac{g_i}{g_j}.
  \end{equation*}
  Continue the process recursively to domains at larger distances from
  $P_1$. The tree structure guarantees that this covers the entire set
  of domains on the graph without ever reaching a domain for which the
  variable $t$ has been previously computed.  Note that the values of
  $t$ alternate in sign, leading to the function $f$ whose nodal
  partition is precisely $\cP$.  We will now show that the resulting
  function $f$, equation~\eqref{eq:construct_tree_eigenstate}, is an
  eigenvector of $H(G)$.

  Start with the Hamiltonian $H(\cP; \gamma)$ and observe that since
  $f$ is composed out of eigenvectors of the connected components of
  $H(\cP; \gamma)$ that \emph{share the same eigenvalue},
  \begin{equation*}
    H(\cP; \gamma) f = \Lambda(\mathcal{P};\gamma) f.
  \end{equation*}
  We will now add the edges from $\cE(G) \setminus \cE(\cP)$ one by
  one.  For simplicity of notation we will consider the addition of
  the edge $e=(i,j)$ between the components $P_1$ and $P_r$.  The
  addition of the edge results in a matrix $B$ defined by
  \eqref{eq:B_r1}-\eqref{eq:B_r2} with $\alpha=\gamma_e$.  The
  relevant elements of the vector $f$ are
  \begin{equation*}
    f_i = t_1 g_1, \qquad f_j = t_r g_j = t_1 \gamma_e g_i.
  \end{equation*}
  A direct computation shows that $Bf = 0$ and therefore $f$ remains an
  eigenfunction of the modified graph with the same eigenvalues.
  Proceeding similarly with the other edges we conclude that $H(G)f =
  \Lambda(\mathcal{P};\gamma) f$.

  The function $f$ has $\nu$ nodal domains with respect to the graph
  $G$.  From Corollary~\ref{cor:tree_partition_sharp} we get that the
  function $f$ as an eigenfunction of $H(G)$ is Courant-sharp, that
  is, its index is $\nu$.
  Finally, since there is at most one Courant-sharp eigenfunction with
  $\nu$ domains, the equipartition is unique.
\end{proof}

Let $f^{(n)}$ be a non-degenerate eigenvector corresponding to the
eigenvalue $\lambda_n(G)$.  If the corresponding partition $\cP$ is
not a tree, there will be other equipartitions locally around the
special point~\eqref{eq:induced_equipartition}. Namely, we shall consider
 a ball $\mathcal{B}_{\epsilon}(\alpha^{c})$
of radius $\epsilon$ centered at $\alpha^{c}$ defined by
\eqref{eq:induced_equipartition}.  The value of $\epsilon>0$ is chosen
sufficiently small so that the following two conditions are satisfied:
(\emph{i}) the variation in the equipartition energy
$\Lambda(\mathcal{P};\alpha)$ is smaller than the minimal separation
between successive eigenvalues in $\sigma(G)$ and  (\emph{ii}) none of
the hyper-planes $\alpha_{e}=0$ intersect the ball. The discussion
which follows is restricted to $\alpha\in
\mathcal{B}_{\epsilon}(\alpha^{c})$ and thus, only local properties of
the equipartition energy function are considered.

\begin{theorem}
  \label{thm:morse}
  Let $\mathcal{P}$ be a bipartite $\nu$ partition of $G$ with $\eta =
  \zer(G;\mathcal{P})-(\nu-1) > 0$ and $H(\mathcal{P};\alpha)$ be the
  associated parameter dependent Hamiltonian.  If there exists an
  equipartition $(\cP, \ta)$, then, in the vicinity of $\ta$ the
  set $\cQ(\mathcal{P})$ of equipartitions
  forms a smooth $\eta$-dimensional submanifold of $\mathbb{R}^{\zer(G)}$.

  The point $\alpha^c\in \R_-^{\zer(G)}$ is a critical point of the
  equipartition energy $\Lambda(\mathcal{P};\alpha)$ on the manifold
  $\cQ(\mathcal{P})$ if and only if it corresponds to an
  non-degenerate eigenfunction with eigenvalue
  $\Lambda(\mathcal{P};\alpha^c)$ and nodal domains $\cP$.
  Furthermore, if $\Lambda(\mathcal{P};\alpha^c)$ is the $n$-th
  eigenvalue in the spectrum, the Morse index $\morse_n$ of the
  critical point $\alpha^c$ satisfies
  \begin{equation} \label{eq:morse}
    \morse_n \ =\ n-\nu_n .
  \end{equation}
\end{theorem}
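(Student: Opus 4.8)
The proof will proceed in three movements: (1) establish the smooth manifold structure of $\cQ(\cP)$ near $\ta$ via the implicit function theorem; (2) characterize the critical points of $\Lambda$ restricted to this manifold as exactly the eigenfunction points $\alpha^c$ from \eqref{eq:induced_equipartition}; (3) compute the Morse index by tracking, edge by edge, the bookkeeping identity \eqref{eq:morse1} established in Section~\ref{section:preparations}.

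**Step 1: The manifold.** The set $\cQ(\cP)$ is cut out by the $\nu-1$ equations $\lambda_1(P_k;\alpha)=\lambda_1(P_{k+1};\alpha)$, $k=1,\dots,\nu-1$. Each $\lambda_1(P_k;\alpha)$ is a simple eigenvalue (ground state of a connected Schr\"odinger operator, hence simple by Perron--Frobenius), so it depends real-analytically on $\alpha$, and by first-order perturbation theory $\partial\lambda_1(P_k;\alpha)/\partial\alpha_e = -\big(g_i(P_k;\alpha)\big)^2 + \alpha_e^{-2}\big(g_j(P_k;\alpha)\big)^2$ when $e=(i,j)$ is an edge incident to $P_k$, and $0$ otherwise. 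The key observation is that each deleted edge $e$ is incident to exactly two domains (its two endpoints lie in different $P_k$'s since $\cP$ is bipartite with partition graph $\tG(\cP)$), so the $(\nu-1)\times\zer(G;\cP)$ Jacobian of the defining map has a structure mimicking the incidence matrix of $\tG(\cP)$. Since $\tG(\cP)$ is connected, this incidence-type matrix has rank $\nu-1$ provided the relevant ground-state components $g_i(P_k)$ do not vanish --- which holds near $\ta$ since ground-state components are strictly positive on connected graphs. Hence by the implicit function theorem $\cQ(\cP)$ is a smooth submanifold of dimension $\zer(G;\cP)-(\nu-1)=\eta$.

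**Step 2: Critical points are eigenfunctions.** On $\cQ(\cP)$, $\Lambda(\cP;\alpha)=\lambda_1(P_k;\alpha)$ for any $k$. A point $\alpha^c$ is critical iff $d\Lambda$ annihilates the tangent space $T_{\alpha^c}\cQ(\cP)$, i.e.\ iff the gradient of $\lambda_1(P_k;\cdot)$ (same for all $k$ on the manifold, up to the constraint) lies in the span of the constraint gradients. Using the explicit gradient formulas and a Lagrange-multiplier argument, this is equivalent to the existence of multipliers so that the combined condition forces, on each edge $e=(i,j)$, the relation $\alpha_e^2 = g_j^2/g_i^2$ with the appropriate sign --- precisely \eqref{eq:critpoint} applied to each block. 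Gluing the block ground states $g(P_k;\alpha^c)$ with alternating coefficients $t_k$ exactly as in the proof of Theorem~\ref{thm:tree} (the construction extends because at a critical point $Bf=0$ on each added edge) produces a genuine eigenfunction $f$ of $H(G)$ whose nodal partition is $\cP$; conversely \eqref{eq:induced_equipartition} together with Theorem~\ref{tm:critical} shows any non-degenerate eigenfunction with nodal partition $\cP$ gives such a critical point. Non-degeneracy (simplicity of $\Lambda(\cP;\alpha^c)$ in $\sigma(G)$ and non-vanishing components) follows from Corollary~\ref{cor:thm1}.

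**Step 3: The Morse index.** This is the heart of the matter and the step I expect to be the main obstacle. The idea is to relate the Hessian of $\Lambda$ on $\cQ(\cP)$ at $\alpha^c$ to the process of adding back the $\zer(G;\cP)$ deleted edges to the tree-skeleton. Start from a spanning-tree subpartition: choose $\nu-1$ of the deleted edges forming a spanning tree of $\tG(\cP)$; adding these back (with the critical parameters) assembles the $\nu$ domains into a single connected tree-factor $T$ on which $f$ is the $\nu$-th eigenfunction (Courant sharp, by the argument in Theorem~\ref{thm:tree}). The remaining $\eta=\beta(G)-\ell(G;\cP)$ edges, when added back one at a time, each contribute via \eqref{eq:morse1}: $\Delta\ell(e)-\Delta\nu(e)=M(e)-\Delta n(e)$. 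Summing over these $\eta$ edges, $\nu_n = \nu + 0$ (the nodal count does not change from $T$ to $G$ since all added edges are within-domain, being the edges left inside nodal domains, so $\sum\Delta\nu=0$ and $\sum\Delta\ell = \ell_n$), while the index shifts by $\sum\Delta n = \sum(M(e)-\Delta\ell(e)) = \big(\sum M(e)\big) - \ell_n$; since $f$ has index $\nu$ on $T$, its index on $G$ is $n = \nu + \sum\Delta n = \nu + \big(\sum M(e)\big) - \ell_n$. Now the crucial identification: $\sum_e M(e)$, the number of added edges at which the eigenvalue was at a \emph{maximum} of its branch, equals the Morse index $\morse_n$ of $\Lambda$ on $\cQ(\cP)$ at $\alpha^c$ --- because restricting $\Lambda$ to $\cQ(\cP)$ and differentiating the equipartition constraint shows that the $\eta$ independent directions of the Hessian are in one-to-one correspondence with the $\eta$ non-tree edges, and the sign of the second derivative along the direction associated to $e$ is governed by whether the branch $\lambda_m(\cdot;\alpha_e)$ attains a local max ($M(e)=1$, contributing a negative-definite direction, i.e.\ to the Morse index) or local min ($M(e)=0$) at $\alpha^c_e$. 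Combining, $\morse_n = \sum_e M(e) = n - \nu + \ell_n$; but $\ell_n = \ell(G;\cP)$ and $\nu = \nu_n$, and by \eqref{eq:IRU} $\nu_n = \zer_n - \beta + \ell_n + 1$ gives $\ell_n = \beta - \eta$... rather, more directly $n - \nu_n = \sum M(e) - \ell_n + (\ell_n) $; carrying the arithmetic through yields $\morse_n = n - \nu_n$. The delicate point requiring care is the last identification of $\sum M(e)$ with the analytic Morse index: one must verify that the Hessian of $\Lambda$ restricted to $\cQ(\cP)$ block-diagonalizes (or at least has definite signature matching the edge-by-edge count) in the coordinates adapted to the non-tree edges, which requires showing the mixed second derivatives do not spoil the signature --- this is where the bulk of the technical work lies, and where the argument most closely parallels, but must be adapted from, the treatment in \cite{BBRS,BKS}.
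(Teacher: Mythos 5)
Your overall architecture --- parametrize the equipartition set by the $\eta$ non-tree edges of $\tG(\cP)$, characterize critical points via Lagrange multipliers and glue the block ground states into an eigenfunction of $H(G)$, then compute the Morse index by adding edges back one at a time and counting maxima --- is the same as the paper's, and your Step 2 matches its argument almost verbatim. In Step 1 you take a somewhat different route: you apply the implicit function theorem to the $\nu-1$ constraints and claim full rank of the Jacobian from an incidence-matrix structure, whereas the paper builds an explicit parametrization by removing the $\eta$ edges and invoking Theorem~\ref{thm:tree} to see that the remaining $\nu-1$ parameters are then determined along with a Courant-sharp eigenfunction of $G_\xi$. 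The paper's route avoids the rank verification, which in your version is not quite the textbook statement: each column of your Jacobian has entries $-g_i^2$ and $\alpha_e^{-2}g_j^2$ that do not sum to zero away from the critical point, so it is a signed, unevenly weighted incidence matrix and the ``connected graph implies rank $\nu-1$'' step deserves an explicit argument (e.g.\ peeling leaves of a spanning tree of $\tG(\cP)$).

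The genuine defect is in Step 3. The $\eta$ edges you add back in the final stage are edges of $\cE(G)\setminus\cE(\cP)$ running \emph{between} nodal domains of opposite sign (which is precisely why every $\alpha^c_e<0$); they are not ``within-domain'' edges, and the within-domain edges were never removed to begin with. Consequently $\sum\Delta\ell=0$, not $\ell_n$: every constant-sign cycle of $G$ lies inside some $P_k$ and is already present in $G_\xi$, and adding an edge between opposite-sign domains creates no new constant-sign cycle. The correct bookkeeping is just \eqref{eq:alphalt0} (equivalently \eqref{eq:morse1} combined with $\Delta\ell(e)=\Delta\nu(e)=0$ from \eqref{eq:alphalt0f}): $\Delta n(e)=M(e)$ for each added edge, hence $n-\nu=\sum_e M(e)$ with no $\ell_n$ appearing anywhere. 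Your computation inserts a spurious $-\ell_n$ into $\sum\Delta n$ and a compensating spurious $+\ell_n$ into the nodal count, and the phrase ``carrying the arithmetic through yields $\morse_n=n-\nu_n$'' cancels these two errors against each other rather than deriving the identity. You do correctly isolate the identification of $\sum_e M(e)$ with the analytic Morse index of $\Lambda$ on $\cQ(\cP)$ as the delicate point; the paper handles this exactly as you propose, using the coordinates $\xi_e$ adapted to the non-tree edges, deferring the details to \cite{BBRS} and appealing to coordinate-independence of the Morse index, so on that point you are no less complete than the source.
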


\begin{proof}
  We start by describing a parametrization of the manifold of
  equipartitions by $\eta$ independent parameters
  $\{\xi_k\}_{q=1}^{\eta}$.  The construction is local around an
  existing equipartition.  Start with the partition $\mathcal{P}$ (see
  Fig.~\ref{fig:partition}(a)), and remove $\eta = \zer-(\nu -1)$
  edges which connect different nodal domains, leaving $\nu(G) -1$
  bridge edges which turn the partition graph $\tilde G(\mathcal{P})$
  into a tree (as in Fig.~\ref{fig:partition}(b)).  The set of removed
  edges will be denoted by $X(\cP)$.  Every deletion of an edge
  $e=(i,j)$ is accompanied by the modification of the vertex
  potentials at the vertices $i$ and $j$ by a parameter $\xi_{e}$ as
  usual.  This partial set of $\eta$ parameters $\xi_e$ uniquely
  defines the equipartition energy. Indeed, the graph which was
  produced by the deletion of edges and which we denote by $G_{\xi}$
  is exactly of the kind which was discussed in theorem
  \ref{thm:tree}.  Since we started at an existing equipartition
  $(\cP,\ta)$, it corresponds to an equipartition $(\cP,
  \widetilde{\beta})$ of the graph $G_{\widetilde{\xi}}$.  Here
  $\widetilde{\beta}$ is a set of $\nu(G) -1$ entries of $\ta$ that
  correspond to edges $e \not\in X(\cP)$.  The $\eta$ parameters
  $\widetilde{\xi}$, on the other hand, are the entries of $\ta$ that
  correspond to the edges $e \in X(\cP)$.

  From Theorem~\ref{thm:tree} we conclude that the equipartition
  corresponds to a Courant-sharp eigenfunction
  $\psi^{(\nu)}(G_{\widetilde\xi})$.  Under local variation of the
  parameters $\xi$ the eigenfunction $\psi^{(\nu)}(G_{\xi})$ remains
  Courant-sharp and induces an equipartition of the graph $G_{\xi}$
  and, therefore, of the graph $G$.  Moreover, the eigenfunction is
  smooth as a function of the parameters $\xi$.  This allows us to
  define the parameters $\alpha$ (thus constructing a locally smooth
  immersion $\cQ(\mathcal{P}) \to \mathbb{R}^{\zer(G)}$ ) in the
  following manner.  There is one parameter for each edge in $\cE(G)
  \setminus \cE(\cP)$.  If the edge $e$ is in the set $X(\cP)$, we
  take $\alpha_e = \xi_e$.  Otherwise, we compute $\alpha_e$ from the
  eigenfunction $\psi^{(\nu)}(G_{\xi})$ according to the familiar
  prescription, see equation~\eqref{eq:crit_alpha}.

  %%%%%%%%%%%%%

  \begin{figure}
    \label{fig:partition}
    \vspace{1mm}
    {\includegraphics[scale=0.5]{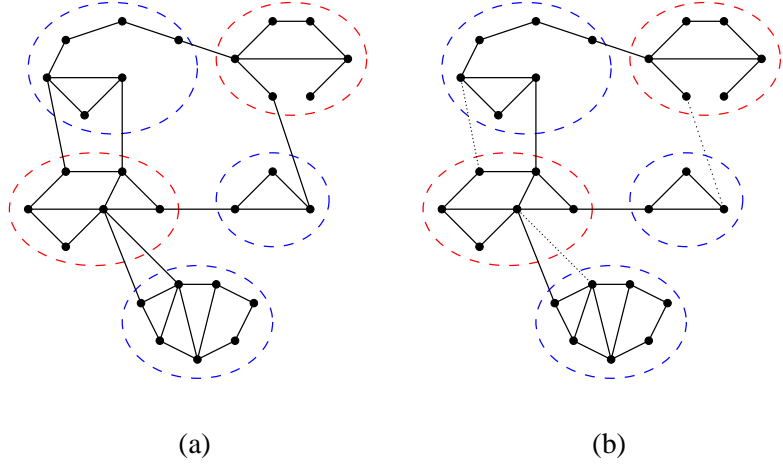}}
    \caption{(a) A graph with a bipartite partition. Circles enclose
      vertices within the same domain.(b) The dashed edges are removed
      from (a) to generate a tree-partition.}
  \end{figure}

  %%%%%%%%%%%%

  We now prove that the critical points of the equipartition energy
  $\Lambda(\mathcal{P};\alpha)$ correspond to the eigenvectors of
  $H(G)$ and vice versa.  The search for critical points should be
  carried out in the manifold $\cQ(\mathcal{P})$.  To do so we search
  for the critical points of $\lambda_1(P_1;\alpha)$ and impose the
  restriction to $\cQ(\mathcal{P})$ by introducing $(\nu-1)$
  Lagrange multipliers $\{ \zeta_k\}$.  One has to search for the
  critical points of
  \begin{equation}
    \tilde \Lambda (\mathcal{P};\alpha) = \lambda_1(P_1;\alpha)
    + \sum_{k=1}^{\nu-1} \zeta_k \left( \lambda_1(P_k;\alpha)
      - \lambda_1(P_1;\alpha) \right),
  \end{equation}
  The sum can be written in a more concise way as
  \begin{equation}
    \label{eq:concise_sum}
    \tilde \Lambda (\mathcal{P};\alpha)
    = \sum_{k=1}^{\nu} c_k \lambda_1(P_k;\alpha),
  \end{equation}
  where the $c_k$ are linear in the $\zeta_k$.  For every $e=(i,j)$
  only two terms in \eqref{eq:concise_sum} depend on the parameter
  $\alpha_e$, namely the terms corresponding to $P_{\chi(i)}$ and
  $P_{\chi(j)}$, where the function $\chi(\cdot)$ maps a vertex to the
  number of the corresponding domain.  Taking the derivative with
  respect to $\alpha_{ij}$, we get
  \begin{equation}
    \frac{\partial\tilde\Lambda(\mathcal{P};\alpha)}{\partial \alpha_{ij}}
    = c_{\chi(i)} \frac{\partial \lambda_1 (P_{\chi(i)};\alpha)}{\partial \alpha_{ij}}
    + c_{\chi(j)} \frac{\partial \lambda_1 (P_{\chi(j)};\alpha)}{\partial \alpha_{ij}}.
  \end{equation}
  As previously, let $g(P_k;\alpha)$ be the first eigenvector of the
  domain Hamiltonian $H(P_k;\alpha)$, normalized and positive.  Using
  first-order perturbation theory and the explicit dependence of the
  potential on $\alpha_{ij}$, similar to what is done in theorem
  \ref{tm:critical} the following equations must be satisfied at the
  critical point for every $(i,j)\in \mathcal{E}(G) \setminus
  \mathcal{E(P)}$:
  \begin{equation}
    \label{eq:crit_condition}
    c_{\chi(i)} g_i(P_{\chi(i)};\alpha)^2
    - c_{\chi(j)} \frac{1}{\alpha_{ij}^2} g_j(P_{\chi(j)};\alpha)^2 = 0.
  \end{equation}
  We immediately conclude that the critical values of the Lagrange
  multipliers $c_\chi$ are non-negative.  Form the function
  \begin{equation}
    \label{eq:efun_from_crit}
    f = \sum_{k=1}^{\nu-1} \pm\sqrt{c_k} g(P_k;\alpha),
  \end{equation}
  where the signs are to be chosen in accordance with the bipartite
  structure of the partition $\cP$.  Then
  equation~\eqref{eq:crit_condition} can be written as
  \begin{equation*}
    f_i^2 - \frac1{\alpha_{ij}^2} f_j^2 = 0
    \qquad \text{or} \qquad
    \alpha_{ij} = \frac{f_j}{f_i},
  \end{equation*}
  which describes the values of $\alpha_e$ at the critical points.
  The function $f$ is an eigenfunction of the Hamiltonian $H(\cP, \alpha^c)$
  --- in fact one belonging to the $(\nu-1)$-dimensional eigenspace of
  the degenerate lowest eigenvalue $\Lambda(\cP;\alpha^c)$.  It can
  now be shown by explicit calculation, see the discussion
  leading up to equation~\eqref{eq:factorh}, that $f$ is also an
  eigenfunction of the original Hamiltonian $H(G)$.

  Conversely, starting from an eigenfunction $f$ with the given nodal
  domain structure $\cP$, we define
  \begin{equation*}
    \alpha_{ij} = \frac{f_j}{f_i}, \qquad
    c_k = \sum_{v\in P_k} f_v^2,
  \end{equation*}
  and check that the condition for the critical point~\eqref{eq:crit_condition} is satisfied.

  Finally, to show~\eqref{eq:morse}, we go back to the graph $G_\xi$
  and start the process of successive addition of edges and use
  \eqref{eq:alphalt0} at each step, since all parameters $\alpha_{e}$
  are negative.  Thus, the change in the position in the spectrum
  $\Delta n$ over $\eta$ additions equals the number of times the
  critical point is approached as a maximum:
  \begin{equation*}
    \Delta n =  \sum M (e).
  \end{equation*}
  Since at the beginning of the process we had a Courant-sharp
  eigenfunction with $\nu$ domains, its position in the spectrum was
  $\nu$.  Thus the change $\Delta n$ of position is from $\nu$ to $n$,
  or,
 \begin{equation}
    n-\nu_n = \morse_n,
  \end{equation}
  where $\morse_n$ is the total number of independent directions in
  which one approaches the critical point of the equipartition as a
  maximum; the Morse index (see \cite[Section 4.3]{BBRS} for more detail).

  The above construction was carried out for a particular
  parametrization of the manifold in the space of parameters
  corresponding to equipartitions. However, since $\Lambda(P;\alpha)$
  is analytic in the neighborhood of the critical point, the Morse
  index does not depend on the choice of coordinates.
\end{proof}

%%%%%%%%%%%%%%%%%%%%%%%%%%%%%%%%%%%%%%%%%%%%%%%%%%%%%%%%
\section{Acknowledgements}

The authors cordially thank Rami Band for many invaluable discussions
and comments. The work of HR and US was supported by an EPSRC grant
(EP/G02187), The Israel Science Foundation (ISF) grant (169/09) and
from the Welsh Institute of Mathematical and Computational Science
(WIMCS).  The work of GB was suported by the National Science Foundation under
Grant No. DMS-0907968.  HR and GB thank the Department of Complex
Systems at the Weizmann Institute for the hospitality extended during
their visits.

\bibliographystyle{abbrv}

\bibliography{partitions1}

\end{document}